\documentclass[11pt,fleqn]{article}
\usepackage[round]{natbib}
\usepackage{amsmath,amssymb,amsthm}
\usepackage{xcolor}
\usepackage[pdfstartview=FitH,bookmarksopen=false,
colorlinks=true,linkcolor=blue,citecolor=blue]{hyperref}
\usepackage[left=2cm,right=2cm,top=2cm,bottom=2cm]{geometry}
\usepackage{graphicx}
\usepackage{caption}
\captionsetup[figure]{name={Fig.},labelsep=period}

\textwidth=6.30 in \textheight=8.6 in
\evensidemargin=0 in \oddsidemargin=0.05 in
\topskip=0 pt \baselineskip=12 pt
\parskip=4 pt \parindent=1 em

\newtheorem{theorem}{Theorem}

\newtheorem{observation}{Observation}

\newtheorem{remark}{Remark}

\newcounter{mycounter}
\setcounter{mycounter}{1}

\begin{document}

        \title{\bf Highly mutually dependent unions and new axiomatizations of the Owen value}
	\author{Songtao He,\, Erfang Shan\thanks{{\em Corresponding author}. E-mail addresses: efshan@shu.edu.cn (E. Shan), hesongtao@shu.edu.cn (S. He), zhq5036@163.com (H. Zhou)}, \, Hanqi Zhou
	}
	\date{}
	\maketitle \baselineskip 17 pt
\begin{abstract}
	%In cooperative games with coalition structures, we introduce several novel axioms, including the axiom invariance across games, the weaker version of symmetry between unions, %the concept of inter-unions marginal contribution, the super inter-unions marginality axiom, and a weaker form of differential marginality between mutually dependent unions. Based on these axioms, along with standard properties of efficiency, additivity, the null player property, and symmetry within unions, we provide three new axiomatizations of the Owen value.
	
 The Owen value	is an well-known allocation rule for cooperative games with coalition structure.
 In this paper, we introduce the concept of highly mutually dependent unions. Two unions in a cooperative game with coalition structure are said to be highly mutually dependent if  any pair of players, with one from each of the two unions, are mutually dependent in the game.
 Based on this concept, we introduce two axioms: weak mutually
dependent between unions and differential marginality of inter-mutually
dependent unions. Furthermore, we also propose another two axioms: super inter-unions
marginality and  invariance across games, where the former one is based on the concept of the inter-unions marginal contribution. By using the axioms and combining with some standard axioms, we present three axiomatic characterizations of the Owen value.
	
	\bigskip
	\noindent {\bf Keywords}: TU-game; coalition structure; CS-game; Owen value; weak mutually dependent between unions; invariance across games
	
	\medskip
	\noindent {\bf AMS (2000) subject classification:} 91A12
	
	\noindent {\bf JEL classification:} C71
\end{abstract}

\section{Introduction}\label{section1}

In many negotiations, due to some common interests, some players prefer to cooperate together
than with others. The basic tool for analyzing such negotiations  is the cooperative games with coalition structure (CS-games) \citep{1974_aumann_dreze},
in which a coalition structure is defined as a partition of the player set into disjoint unions.
\cite{1977_owen} generalized the famous Shapley value \citep{1953_shapley} to the setting of CS-games and  characterized
it by means of five axioms: efficiency, additivity, symmetry within unions, symmetry between unions and the null player property.
Later on, numerous axiomatic characterizations of the Owen value have been developed (see, e.g., \cite{wi1992,vaz1997,ha-1999,2001_hamiache,2010_ca,Lo2016,2007_kh,2008_albizuri,2021_hu,Cas2023,chen2024,2024_he}).

The axiom of symmetry between unions extends the standard axiom  of symmetry  \citep{1953_shapley} to CS-games. It states that two unions should receive the same payoff if they are symmetric in the quotient game played by unions.
In real life, it is possible that unions are mutually dependent or that any pair of players from two different unions are mutually dependent.
For instance,  in a certain industry, there are two different business unions A and B. Union A is composed of some large manufacturers, and union B is made up of some powerful sales channel providers. These two unions are mutually dependent because union A needs union B to promote the products it manufactures to the market, while union B requires union A to supply high-quality products for sale. Generally speaking, the manufacturers from union A and the sellers from union B are also interdependent with each other.
To describe this situation, we introduce the notion of highly mutually dependent unions. Based on this definition, we propose two  axioms and establish three new axiomatizations of the Owen value by combining them with several standard axioms.

Two unions in CS-games are said to be highly mutually dependent if any pair of players, with one player from one union and the other player from the other union respectively, are mutually dependent.
Based on this notion, we first propose an property, called weak mutually dependent between unions, which is a relaxation of the standard axiom of symmetry between unions. This axiom requires that if two unions  in a game are highly mutually dependent, then the sum of all players' payoff within these two unions should be equal.
We also introduce  three  properties involving marginality and invariance: differential marginality of inter-mutually dependent unions, super inter-unions marginality and invariance across games.
The axiom of differential marginality of inter-mutually dependent unions states that if two unions are highly
mutually dependent in the differential game for two games, then the differences of the total payoff within
these two unions should be equal in the two games.
The axioms of super inter-unions marginality states that if the inter-unions marginal contributions of a player\footnote{For a CS-game $(N,v,\mathcal{B}) \in \mathcal{CG}$ and $i \in B_p$ with $B_p \in \mathcal{B}$,
 $v\big(\big(\bigcup_{B_i\in \mathcal{B}'}\cup S\big)\cup \{i\}\big)-v\big(\bigcup_{B_i\in \mathcal{B}'}\cup S\big)$ is called the {\em inter-unions marginal contribution} of player $i$ to coalition $\bigcup_{B_i\in \mathcal{B}'}\cup S$ where $S\subseteq B_p\setminus \{i\}$ and  $\mathcal{B}'\subseteq \mathcal{B}\setminus \{B_p\}$.}  in two games are the same, then the payoff he receives should be the same.
The axiom of invariance across games
 describes the situation in which all unions have the same productive identity (simultaneously productive or unproductive) in two different games.
 It requires that for any two union-wise mutually dependent CS-games (see subsection \ref{subsec1}), provided all members within a union have the same productive identity  and all non-null players within this union are mutually dependent in the two games, the payoffs of all players within this union should remain unchanged across the two games.
 The axioms of marginality and invariance  are often used in the characterization of values for cooperative games (see, e.g., \cite{1985_young,2007_kh,2010_ca,2020_ma,2023_shan,2024_shan,2024_chenj,2024_cheo,2024_he}).

The purpose of this paper is to propose new axiomatic characterizations of the Owen value by making use of the axioms mentioned above.
Firstly, by combining the axiom of weak mutually dependent between unions with the standard axioms: efficiency, additivity, symmetry within unions, and the null player property, we provide the first axiomatic characterization of the Owen value.
Secondly, we show that the axiom of super inter-unions marginality,
together with the axioms of weak mutually dependent between unions, efficiency and symmetry within unions, uniquely characterizes the Owen value.
Finally, we show that the Owen value is characterized by the axioms of differential marginality of inter-mutually dependent unions, efficiency, null player out and differential marginality of mutually dependent players within unions.

The rest of this paper is organized as follows.  Section \ref{section2} contains some preliminaries
for TU-games, CS-games, and the Owen value. Section \ref{section3}  states our axioms and characterizes the Owen value. Finally, in Section \ref{section4}, we conclude this paper.

\section{Preliminaries}\label{section2}
\subsection{TU-games}\label{section2.1}

Denote $\mathcal{U}$ as an infinite set of potential players and let $\mathcal{N}$ be the set of all non-empty and finite subsets of $\mathcal{U}$. For any $N \in \mathcal{N}$, a \textit{cooperative game with transferable utility} (\textit{TU-game}) is a pair $(N,v)$, where  $v :2^{N} \to \mathbb{R}$ is the \textit{characteristic function} satisfying  $v(\emptyset)=0$. A subset $S \subseteq N$ is called a \textit{coalition}
 and $v(S)$ is the \textit{worth} of coalition $S$ in the cooperation, the set $N$ is called the \textit{grand coalition}.  Let ${\mathcal{G}^{N}}$ be the set of all TU-games over $N$, and let $\mathcal{G} := \bigcup_{N \in \mathcal{N}}\mathcal{G}^{N}$.  We denote $(N,v)$ as $v$ if there is no ambiguity.  $|S|$ denotes the cardinality of a set $S$.
 %To simplify the notation,  we use $S\cup\{i\}$ and $N \setminus   \{i\}$ to denote $S \cup \{i\}$ and $N \setminus   \{i\}$, respectively.

The \textit{null game} ${\bf 0} \in {\mathcal{G}^{N}}$ is defined as ${\bf {0}}(S)=0$ for all $S \subseteq N$. The \textit {unanimity game} $(N,u^N_T)$ is given by $u^N_T(S)=1$ for all $T \subseteq S$ and $u^N_T(S)=0$ otherwise. The family $\{u_T:\emptyset \neq T \subseteq N\}$ forms a basis on ${\mathcal{G}}^{N}$, thus any TU-game $v \in {\mathcal{G}}^{N}$ can be uniquely represented by
\begin{eqnarray*}
	v=\sum_{T\subseteq N, T \neq \emptyset} \lambda_T(v)u_{T}^{N},
\end{eqnarray*}
where $\lambda_T(v) = \sum_{S \subseteq T} {(-1)}^{|T|-|S|} v(S)$ is called the \textit{Harsanyi dividend} of the coalition $T$ \citep{1959_harsanyi}. The worth of every non-empty coalition $S \subseteq N$ can be written as
\begin{equation}\label{equation1}
	v(S) = \sum_{\emptyset \neq T \subseteq S} \lambda_T(v).
\end{equation}

For a TU-game $v \in {\mathcal{G}}^{N}$, a player $i \in N$ is called a \textit{null player} in $v$ if $v(S\cup\{i\})=v(S)$ for all $S \subseteq {N \setminus   \{i\}}$. Player $i$ is  a \textit{necessary player} in $v$ if $v(S)=0$ for any $S \subseteq N \setminus   \{i\}$. Two players $i,j\in N$ are \textit{symmetric} in $v$ if  $v(S\cup\{i\})-v(S)=v(S\cup\{j\})-v(S)$ for all $S \subseteq N \setminus   \{i,j\}$ and they are  \textit{mutually dependent} in $v$ if $v(S\cup\{i\})-v(S)=v(S\cup\{j\})-v(S)=0$ for any $S \subseteq N \setminus   \{i,j\}$ \citep{1995_nowak}. Obviously,  any two necessary players are mutually dependent, and any two mutually dependent players are symmetric. For two TU-games $(N,v), (N,w)\in \mathcal{G}^N$, a player $i \in N$ has the same \textit{productive identity} in these two games if it is either a null player in both games $v$ and $w$ or a non-null player in the both games.

Based on the definitions above, we have the following basic observation \citep{2024_shan}.

\begin{observation}\label{ob1}

    (i)  A player $i\in N$ is a null player in $v$ if and only if $\lambda_T(v)=0$ for all $T\subseteq N$ containing  $i$;

    (ii) For any $T \subseteq N$, if there is a necessary player $i$ in $v$ such that $i \notin T$, then $\lambda_T(v)=0$;

    (iii) Two players $i$ and $j$ are mutually dependent in $v$ if and only if  $\lambda_{T\cup\{i\}}(v)=\lambda_{T\cup\{j\}}(v)=0$ for all $T \subseteq N \setminus   \{i,j\}$.

\end{observation}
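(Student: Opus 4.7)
The plan is to exploit the Möbius inversion in equation (1) that connects Harsanyi dividends to worths, combined with a size-induction whenever we need to pass from worths back to dividends.

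For part (i), one direction is immediate: if every $T\ni i$ satisfies $\lambda_T(v)=0$, then for any $S\subseteq N\setminus\{i\}$, expanding $v(S\cup\{i\})-v(S)$ via (1) collects precisely those dividends $\lambda_{T'}(v)$ with $i\in T'\subseteq S\cup\{i\}$, all of which vanish by hypothesis. Conversely, assuming $i$ is a null player, I would induct on $|T|$ over $T\ni i$: the base $T=\{i\}$ gives $\lambda_{\{i\}}(v)=v(\{i\})-v(\emptyset)=0$; for the inductive step, use $0=v(T)-v(T\setminus\{i\})=\sum_{T'\subseteq T,\,i\in T'}\lambda_{T'}(v)$, apply the inductive hypothesis to every proper $T'\subsetneq T$ containing $i$, and read off $\lambda_T(v)=0$.

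Part (ii) is a direct computation: if $i$ is necessary and $i\notin T$, then $v(S)=0$ for every $S\subseteq T$ (since no such $S$ contains $i$), so the alternating sum $\lambda_T(v)=\sum_{S\subseteq T}(-1)^{|T|-|S|}v(S)$ collapses to zero.

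Part (iii) mirrors part (i), but restricted to the ``two-player null'' nature of mutual dependence. For the forward direction I would show $\lambda_{T\cup\{i\}}(v)=0$ by induction on $|T|$ over $T\subseteq N\setminus\{i,j\}$ (the claim for $j$ is symmetric). The base $T=\emptyset$ gives $\lambda_{\{i\}}(v)=v(\{i\})=0$ from mutual dependence at $S=\emptyset$. For the step, mutual dependence forces $v(T\cup\{i\})-v(T)=0$, while expansion via (1) rewrites this marginal as $\sum_{T'\subseteq T}\lambda_{T'\cup\{i\}}(v)$; each proper $T'\subsetneq T$ still lies in $N\setminus\{i,j\}$, so the inductive hypothesis eliminates every summand except $\lambda_{T\cup\{i\}}(v)$, forcing it to $0$. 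The converse follows from a single expansion of $v(S\cup\{i\})-v(S)$ via (1), as in part (i). The one subtle point is in part (iii): the induction must be carried out over $T\subseteq N\setminus\{i,j\}$ rather than over all $T$ containing $i$, because mutual dependence says nothing about coalitions containing both $i$ and $j$; once this restriction is respected, each step proceeds just as in part (i).
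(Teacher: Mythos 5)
Your proposal is correct. The paper itself offers no proof of this observation --- it is stated as a ``basic observation'' with a citation to Shan et al.\ (2024) --- so there is nothing to compare against line by line; your argument is the standard one and fills the gap properly. The Möbius-inversion expansion of marginal contributions for the two ``if'' directions, the induction on coalition size for the two ``only if'' directions, and the direct collapse of the alternating sum in part (ii) are all sound, and you correctly identify the one delicate point in part (iii), namely that the induction must range over $T\subseteq N\setminus\{i,j\}$ because mutual dependence gives no information about coalitions containing both $i$ and $j$.
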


An \textit{allocation rule} or a \textit{value} $\varphi$ over ${\mathcal{G}}^{N}$ is a function that assigns a payoff vector $\varphi(N,v) \in \mathbb{R}^{N}$ to any TU-game $(N,v) \in {\mathcal{G}}^{N}$.

One of the famous values in TU-games is the \textit{Shapley value} \citep{1953_shapley}, which is given by the form based on the Harsanyi dividend as
\begin{equation}\label{equation2}
	Sh_i(N,v)=\sum_{\emptyset\neq T \subseteq N :i \in T} \frac{\lambda_T(v)}{|T|}, \ \ \mbox{for all $i \in N$}.
\end{equation}

%%%%%%%%%%%%%%%%%%%%%%%%%%%%%%%%%%%%%%%%%%%%%%%%%%%%%%%%%%%%%%%%
\subsection{CS-games and the Owen value}\label{section2.2}

Let $N \in \mathcal{N}$, a \textit{coalition structure} $\mathcal{B}=\{B_1,B_2,...,B_m\}$ is a \textit{partition} over the players set $N$, in which each $B_p \in \mathcal{B}$ is called a \textit{union}, i.e., $\bigcup_{p \in M}{B_p}=N$ and $B_p \cap B_q=\emptyset$ when $p \neq q$, where $M=\{1,2,...,m\}$ is the set of indices of unions.  $\mathcal{B}^n=\{\{i\}:i \in N\}$ where every union is a singleton and $\mathcal{B}^N=\{N\}$ where the grand coalition forms  are two trivial coalition structures. The set of all coalition structures on $N$ is denoted by $\mathcal{B}(N)$.  For a given coalition $T \subseteq N$ and a coalition structure $\mathcal{B}=\{B_1,B_2,...,B_m\}$, we denote the coalition structure restricted to $T$ as $\mathcal{B}_{|T} = \{T_p : T_p = B_p \cap T \neq  \emptyset \text{ for all } p \in M\}$. We often denote $B(i)$ as the union that contains the player $i$.

A \textit{TU-game with a coalition structure} (\textit{CS-game}) is a triple $(N,v,\mathcal{B})$ where $(N,v)\in {{\mathcal{G}}^{N}}$ is a TU-game and $\mathcal{B} \in \mathcal{B}(N)$ is a coalition structure on $N$. Denote by ${\mathcal{CG}}^{N}$ the set of all CS-games over $N$  and the set of all the CS-games by $\mathcal{CG}:= \bigcup_{N \in \mathcal{N}}{\mathcal{CG}^{N}}$.

For a given CS-game $(N,v,\mathcal{B})$ where $\mathcal{B}=\{B_1,B_2,...,B_m\}$, the \textit{quotient game} associated with $(N,v,\mathcal{B})$ is denoted as $(M,{v}^{\mathcal{B}})$, in which ${v}^{\mathcal{B}}(R)=v(\bigcup_{r\in R}{B_r})$ for all $R \subseteq M$.

A union $B_p \in \mathcal{B}$ is  a \textit{null  union} in $v$ if $p$ is  null  in $(M,{v}^{\mathcal{B}})$, and  $B_p \in \mathcal{B}$ is  a \textit{necessary  union} in $v$ if $p$ is  necessary in $(M,{v}^{\mathcal{B}})$. Two unions, $B_p, B_q\in \mathcal{B}$,  are  \textit{mutually dependent (symmetric) unions} in $v$ if $p$ and $q$ are mutually dependent (symmetric) in $(M,{v}^{\mathcal{B}})$; they are  \textit{highly mutually dependent unions} in $v$ if any player $i \in B_p$ and any player $j \in B_q$ are mutually dependent in
$v$.
 By Observation \ref{ob1} (iii) and the formula $\lambda_R(v^{\mathcal{B}})=\sum_{T\in S^R}\lambda_T(v)$ for any $R\subseteq M$ where $S^R=\{T\subseteq N:\ \ T\cap B_r\neq \emptyset \ \mbox{for all $r\in R$}\}$ \citep{2025_he}, we can deduce that {\bf  if two unions are highly mutually dependent in $v$, then they must be mutually dependent}.

For two games $(N,v,\mathcal{B}),(N,w,\mathcal{B}) \in {\mathcal{CG}}^{N}$, a union $B \in \mathcal{B}$ is said to have the same  productive identity in both $v$ and $w$ if  $B$ is either a null union in both games $v$ and $w$ or a non-null union in the two games.

A \textit{coalitional value} on $\mathcal{CG}^N$ is a function $\varphi$ that assigns a payoff vector $\varphi(N,v,\mathcal{B}) \in
\mathbb{R}^N$ for all CS-game $(N,v,\mathcal{B}) \in \mathcal{CG}^N$.

The \textit{Owen value} \citep{1977_owen} is a  well-known coalitional value on $\mathcal{CG}^N$, given by
\begin{equation}\label{equation3}
	Ow_i(N,v,\mathcal{B}) = \sum_{\emptyset \neq T \subseteq N : T \ni i} \frac{\lambda_T(v)}{|B(i) \cap T| m_T}, \ \ \mbox{for all $i \in N$}
\end{equation}
where $m_T=|\mathcal{B}_{|T}|$. It can also be written as
\begin{equation}\label{equation4}
	\begin{aligned}
		Ow_i(N,v,\mathcal{B}) &= \sum_{R \subseteq M \setminus   \{h\} }\sum_{S \subseteq
B_h \setminus   \{i\}}\frac{|R|!(|M|-|R|-1)!}{|M|!}\frac{|S|!(|B_h|-|S|-1)!}{|B_h|!}\\
		&\quad\times [v(Q(R)\cup S\cup\{i\})-v(Q(R)\cup S)],
	\end{aligned}
\end{equation}
for all $i \in B_h$ and all $B_h \in \mathcal{B}$, where $Q(R)=\bigcup_{r \in R}B_r$.
This difference $v(Q(R)\cup S\cup\{i\})-v(Q(R)\cup S)$ is called the {\em inter-unions marginal contribution} of player $i$ to coalition $Q(R)\cup S$.

The following observation was verified by \cite{wi1992}.
\begin{equation}\label{equation5}
	\sum_{i \in B_p}Ow_i(N,v,\mathcal{B}) = Sh_p(M,v^{\mathcal{B}})\ \  \mbox{for any   $B_p \in \mathcal{B}$}.
\end{equation}

Below, we introduce the axioms  used in the characterizations of the Owen value.

{\em Efficiency} ({\bf E}). For any $(N, v, \mathcal{B}) \in \mathcal{CG}^N$, $\sum_{i\in N}\varphi_i(N, v, \mathcal{B})=v(N)$.

{\em Additivity} ({\bf A}).
For any $(N, v, \mathcal{B}), (N, w, \mathcal{B}) \in \mathcal{CG}^N$, $\varphi(N,v,\mathcal{B}) + \varphi(N,w,\mathcal{B}) = \varphi(N,v+w,\mathcal{B})$.

{\em Null player property} ({\bf N}). For any $(N, v, \mathcal{B}) \in \mathcal{CG}^N$,  $\varphi_i(N, v, \mathcal{B})=0$ if $i\in N$ is a null player in $v$.

{\em Null player out} ({\bf NPO}). For any $(N, v, \mathcal{B}) \in \mathcal{CG}$, if $i\in N$ is a null player in $v$, then $\varphi_j(N, v, \mathcal{B})=\varphi_j(N \setminus   \{i\}, v, \mathcal{B}_{|N \setminus   \{i\}})$ for all $j \in N \setminus   \{i\}$.

{\em Symmetry} ({\bf S}). For any $(N, v, \mathcal{B}) \in \mathcal{CG}^N$,  $\varphi_i (N,v,\mathcal{B})=\varphi_j (N,v,\mathcal{B})$ if two players $i,j\in N$ are symmetric in $v$.

{\em Symmetry within unions} ({\bf SWU}). For any $(N, v, \mathcal{B}) \in \mathcal{CG}^N$ and any two players $i,j \in B \in \mathcal{B}$,  $\varphi_i (N,v,\mathcal{B})=\varphi_j (N,v,\mathcal{B})$ if $i$ and $j$ are symmetric in $v$.

{\em Symmetry between unions} ({\bf SBU}). For any $(N, v, \mathcal{B}) \in \mathcal{CG}^N$ and any two unions $B_p, B_q \in \mathcal{B}$,  $\sum_{i \in B_p}\varphi_i (N,v,\mathcal{B})=\sum_{i \in B_q}\varphi_i (N,v,\mathcal{B})$ if $B_p$ and $B_q$ are symmetric in $v$.

{\em Marginality} ({\bf M}). For any $(N, v, \mathcal{B})$, $(N, w, \mathcal{B}) \in \mathcal{CG}^N$, if $v(S\cup\{i\})-v(S)=w(S\cup\{i\})-w(S)$ for all $S\subseteq N\setminus \{i\}$, then $\varphi_i(N, v, \mathcal{B}) = \varphi_i(N, w, \mathcal{B})$.

%{\em Marginality between unions} ({\bf MBC}). For any $(N, v, \mathcal{B})$, $(N, w, \mathcal{B}) \in \mathcal{CG}$ and $h \in M$, if $v(S\cup B_h)-v(S)=w(S\cup B_h)-w(S)$ for all $S\subseteq N\setminus B_h$, then $\sum_{i \in B_h}\varphi_i(N, v, \mathcal{B}) = \sum_{i \in B_h}\varphi_i(N, w, \mathcal{B})$.

{\em Differential marginality of mutually dependent players within unions} ({\bf UDM$_{md}$}). For any $(N, v, B)$, $(N, w, B) \in \mathcal{CG}^N$, if $i,j \in B_k \in \mathcal{B}$ are mutually dependent players in $v-w$, then $\varphi_i(N, v, B) - \varphi_j(N, v, B) = \varphi_i(N, w, B) - \varphi_j(N, w, B)$.

This axiom requires that if two players in a union are  mutually dependent in the differential game $v-w$, then the differences of the payoffs for the two players in the both games are the same.

{\em Differential marginality between mutually dependent unions} ({\bf DMU$_{md}$}). For any $(N, v, B)$, $(N, w, B) \in \mathcal{CG}^N$, if $B_p, B_q \in \mathcal{B}$ are mutually dependent unions in $v-w$, then $\sum_{i \in B_p}\varphi_i(N, v, B) - \sum_{i \in B_p}\varphi_i(N, w, B) = \sum_{i \in B_q}\varphi_i(N, v, B)-\sum_{i \in B_q}\varphi_i(N, w, B)$.

This axiom states that if two unions are  mutually dependent in the differential game $v-w$, then 
the difference between the sums of payoffs of two unions in the both games should be equal.
%the differences of the total payoffs of players  within these two unions should be equal.

\cite{1977_owen} characterized the Owen value by  efficiency ({\bf E}), additivity ({\bf A}), symmetry within unions ({\bf SWU}), symmetry between unions ({\bf SBU}), the null player property ({\bf N}). \cite{2007_kh} gave an alternative characterization of the Owen value by means of efficiency ({\bf E}), symmetry within unions ({\bf SWU}), symmetry between unions ({\bf SBU}) and marginality ({\bf M}).
Recently, \cite{2024_he} characterized the Owen value by efficiency ({\bf E}), differential marginality of mutually dependent players within unions ({\bf UDM$_{md}$}), differential marginality between mutually dependent unions ({\bf DMU$_{md}$}) and null player out ({\bf NPO}).

%%%%%%%%%%%%%%%%%%%%%%%%%%%%%%%%%%%%%%%%%%%%%%%%%%%%%%%%%%%%%%
\section{Axiomatizations of the Owen value}\label{section3}
In this section we will provide  three characterizations of the Owen value. For this purpose, we will introduce more notions and axioms for CS-games in the following subsection.

\subsection{Some new axioms}\label{subsec1}
Based on the concepts of highly mutually dependent unions and inter-unions marginal contribution introduced in the previous section, we first present
the following three axioms.

{\em Weak mutually dependent between unions} ({\bf MBU$^-$}). For any $(N, v, \mathcal{B}) \in \mathcal{CG}$ and any two unions $B_p, B_q \in \mathcal{B}$, if   $B_p$ and  $B_q$ are highly mutually dependent in $v$, then $\sum_{i \in B_p}\varphi_i (N,v,\mathcal{B})=\sum_{i \in B_q}\varphi_i (N,v,\mathcal{B})$.

This axiom states that if  two unions are highly mutually dependent in a game, then the sum of all players' payoff within these two unions should be equal.

{\em Differential marginality of inter-mutually dependent unions} ({\bf DMU$_{md}^-$}). For any games $(N, v, \mathcal{B})$, $(N, w, \mathcal{B}) \in \mathcal{CG}$ and any two unions $B_p,B_q \in \mathcal{B}$, if   $[v(S\cup\{i\})-v(S)]=[w(S\cup\{i\})-w(S)]$ and $[v(S\cup\{j\})-v(S)]=[w(S\cup\{j\})-w(S)]$ for all $i\in B_p$, all $j\in B_q$ and all $S \subseteq N \setminus   \{i,j\}$,  then $\sum_{i \in B_p}\varphi_i(N, v, B) - \sum_{i \in B_p}\varphi_i(N, w, B) = \sum_{i \in B_q}\varphi_i(N, v, B)-\sum_{i \in B_q}\varphi_i(N, w, B)$.

Note that the hypothesis of the axiom is satisfied if and only if  the unions $B_p$ and $B_p$ are highly mutually dependent in the game $v-w$.
This axiom states that if two unions are highly mutually dependent in the differential game $v-w$, then the differences of the total payoff within these two unions should be equal in both $v$ and $w$.

{\em Super inter-unions marginality} ({\bf IUM$^+$}). For any games $(N,v,\mathcal{B})$, $(N,w,\mathcal{B})$ $\in \mathcal{CG}$ and $i \in B \in \mathcal{B}_p$, if $v(Q(R) \cup S\cup\{i\}) - v(Q(R) \cup S) = w(Q(R) \cup S\cup\{i\}) - w(Q(R) \cup S)$ for all $R \subseteq M \setminus   \{p\} $ and all $S \subseteq B_p \setminus   \{i\}$, then $\varphi_i(N,v,\mathcal{B})=\varphi_i(N,w,\mathcal{B})$.

This axiom states that, for any two CS-games, if a player has the same  inter-unions marginal contribution in both two games, then this player should receive equal payoff in both two games.

For any $(N, v, \mathcal{B}), (N, w, \mathcal{B})\in \mathcal{CG}^N$ such that $v(N) = w(N)$, they are called {\em union-wise mutually dependent} if all unions have the same productive identity (every union is simultaneously either  null  or non-null) in both $v$ and $w$, and all non-null unions are mutually dependent in both two games. Based on this notion, we present the following property.

\textit{Invariance across games} \textbf{(IAG)}. For any pair of union-wise mutually dependent games $(N, v, \mathcal{B})$ and $(N, w, \mathcal{B}) \in {\mathcal{CG}}^N$ and any $B_l \in \mathcal{B}$, if all players in $B_l$ have the same productive identity and all non-null players are mutually dependent in both $v$ and $w$, then $\varphi_i(N, v, \mathcal{B}) = \varphi_i(N, w, \mathcal{B})$ for all $i \in B_l$.

This axiom states that for any two union-wise mutually dependent CS-games, if all members within a union have the same productive identity and all non-null players within this union are mutually dependent in both two games, then the payoffs of all players in this union should remain invariant across the two games.

 As we have seen, the highly mutually dependent unions are also mutually dependent in $v$. This implies that
 the axiom of weak mutually dependent between unions ({\bf MBU$^-$}) is a weakened version of the classical axiom of symmetry between unions ({\bf SBU}) \citep{1977_owen}, and the axiom of differential marginality of inter-mutually dependent unions ({\bf DMU$_{md}^-$}) is weaker than the axiom of differential marginality between mutually dependent unions ({\bf DMU$_{md}$}) \citep{2024_he}.
 Furthermore, we observe that the axiom of super inter-unions marginality ({\bf IUM$^+$}) is slightly stronger than  the classical marginality axiom ({\bf M}) \citep{1985_young}.

\subsection{Axiomatizations}
We give the first characterization of the Owen value by replacing the axiom of symmetry between unions ({\bf SBU}) in the Owen's axiomatization \citep{1977_owen} with the axioms of weak mutually dependent between unions ({\bf MBU$^-$})  and invariance across games ({\bf IAG}).

\begin{theorem}\label{thm1}
	The $Owen$ value is the unique value on $\mathcal{CG}^N$ that satisfies efficiency {\rm ({\bf E})},  additivity {\rm({\bf A})}, symmetry within unions {\rm({\bf SWU})}, weak mutually dependent between unions {\rm({\bf MBU$^-$})}, invariance across games {\rm({\bf IAG})} and the null player property {\rm({\bf N})}.
\end{theorem}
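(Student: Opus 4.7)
The plan is to establish Theorem~\ref{thm1} in two parts: (i) verify that the Owen value satisfies each of the six axioms, and (ii) show that any value on $\mathcal{CG}^N$ satisfying these axioms must coincide with the Owen value.

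For existence, the classical axioms E, A, SWU, and N follow from the original characterization of \cite{1977_owen}. For MBU$^-$, I would argue that any two highly mutually dependent unions $B_p, B_q$ are in particular mutually dependent in the quotient game $(M,v^{\mathcal{B}})$ (as the paper records just after the definition of highly mutually dependent unions), hence they are symmetric in the quotient, so $Sh_p(M,v^{\mathcal{B}}) = Sh_q(M,v^{\mathcal{B}})$; combined with identity~(\ref{equation5}) this yields $\sum_{i\in B_p} Ow_i = \sum_{i\in B_q} Ow_i$. For IAG, I would use the dividend formula (\ref{equation3}) together with Observation~\ref{ob1} to show that the structural hypotheses force the Harsanyi dividends of all coalitions relevant to $Ow_i$ for $i \in B_l$ to coincide on $v$ and $w$.

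For uniqueness, I would use additivity (A) to reduce the problem to computing $\varphi$ on scaled unanimity games $c\cdot u_T^N$, relying on the decomposition $v = \sum_{\emptyset \neq T \subseteq N} \lambda_T(v) u_T^N$. On $c\cdot u_T^N$, a short check via Observation~\ref{ob1}(iii) shows that any two distinct unions both intersecting $T$ are highly mutually dependent, and that all players outside $T$ are null. The axioms then pin down $\varphi$ entirely: N gives $\varphi_i=0$ for $i\notin T$; SWU equalizes payoffs within each $B\cap T$; MBU$^-$ equalizes the union-totals across the unions in $\{B\in\mathcal{B} : B\cap T\neq\emptyset\}$; and E fixes the overall sum at $c$. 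The unique solution to this linear system matches the Owen payoffs, and A extends this identification to all of $\mathcal{CG}^N$.

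The main obstacle will be the verification that the Owen value satisfies IAG. Unlike the other axioms, IAG couples conditions at the quotient level (pairwise mutual dependence of all non-null unions) with internal conditions on the distinguished union $B_l$, and transcribing this into an identity $Ow_i(v) = Ow_i(w)$ requires careful bookkeeping of which dividends are actually controlled by the hypotheses and which cancel in the difference $Ow_i(v) - Ow_i(w)$ via (\ref{equation3}). Once this verification is in hand, the remaining existence checks and the uniqueness argument via unanimity games are largely routine.
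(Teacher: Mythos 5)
Your uniqueness argument has a genuine gap, and it is fatal as written. You claim that on $c\cdot u_T^N$ any two distinct unions both intersecting $T$ are highly mutually dependent, so that {\bf MBU$^-$} equalizes their union totals. This is false unless both unions are \emph{contained} in $T$: if $B_p\cap T\neq\emptyset$ but $B_p\not\subseteq T$, pick $i\in B_p\setminus T$ and $j\in B_q\cap T$; then $i$ is null in $u_T^N$ while $u_T^N((T\setminus\{j\})\cup\{j\})-u_T^N(T\setminus\{j\})=1\neq 0$, so $i$ and $j$ are not mutually dependent and {\bf MBU$^-$} gives you nothing about $B_p$ versus $B_q$. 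Consequently {\bf E}, {\bf N}, {\bf SWU}, {\bf MBU$^-$} do \emph{not} pin down $\varphi$ on such unanimity games: for instance with $N=\{1,2,3,4\}$, $\mathcal{B}=\{\{1,2\},\{3,4\}\}$, $T=\{1,3,4\}$, the condition of {\bf MBU$^-$} is vacuous and both the Shapley value ($1/3,0,1/3,1/3$) and the Owen value ($1/2,0,1/4,1/4$) satisfy all four of these axioms on this game. This is exactly why the theorem needs {\bf IAG}, which your uniqueness argument never invokes; the paper's independence example $\varphi^3$ confirms that the remaining five axioms do not suffice. The paper's proof closes this gap by using {\bf IAG} to pass from $\alpha u_T^N$ to the enlarged unanimity games $\alpha u_{T_1}^N$ and $\alpha u_{T_2}^N$, where $T_1$ adjoins to $T$ the whole of every union other than $B(i)$ meeting $T$ and $T_2=T_1\cup B(i)$; only on $u_{T_2}^N$, where every union meeting the carrier is contained in it, does {\bf MBU$^-$} apply, and one then transports the conclusion back via {\bf IAG} and {\bf E}.

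Your plan for verifying {\bf IAG} is also off target: the hypotheses of {\bf IAG} do not force the relevant Harsanyi dividends of $v$ and $w$ to coincide. Take $N=\{1,2,3\}$, $\mathcal{B}=\{\{1,2\},\{3\}\}$, $v=u_{\{1,3\}}$, $w=u_{\{1,2,3\}}$: these are union-wise mutually dependent and satisfy the hypotheses for $B_l=\{3\}$, yet $\lambda_{\{1,3\}}(v)=1\neq 0=\lambda_{\{1,3\}}(w)$. The payoffs agree not dividend-by-dividend but in aggregate: the paper shows the union total $\sum_{i\in B_l}Ow_i$ equals $Sh_l(M,v^{\mathcal{B}})=v(N)/|M'|$ via identity~(\ref{equation5}) and mutual dependence of the non-null unions, uses $v(N)=w(N)$, and then splits the total equally over the non-null members of $B_l$ by {\bf SWU}. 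The rest of your existence check (E, A, SWU, N, and {\bf MBU$^-$} via {\bf SBU}) matches the paper.
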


\begin{proof}
{\em Existence}.
It is well known that the Owen value satisfies {\bf E}, {\bf A},  {\bf N}, {\bf SWU} and {\bf SBU} \citep{1977_owen}.  Note that {\bf SBU} implies {\bf MBU$^-$}. Thus the Owen value satisfies {\bf MBU$^-$}. Regarding {\bf IAG}, consider
$(N, v, \mathcal{B}), (N, w, \mathcal{B}) \in \mathcal{CG}^N$ and $B_p \in \mathcal{B}$, which satisfies the assumption for {\bf IAG}.

We first claim that $\sum_{i \in B_p}Ow_i(N,v,\mathcal{B})=\sum_{i \in B_p}Ow_i(N,w,\mathcal{B})$.
Indeed, if $B_p$ is a null union in both $v$ and $w$, then, by  (\ref{equation5}), we obtain
\begin{equation}\label{equation6}
	\sum_{i \in B_p}Ow_i(N,v,\mathcal{B})\stackrel{(\ref{equation5})}{=}Sh_p(M,v^{{\mathcal{B}}})=0
=Sh_p(M,w^{{\mathcal{B}}})\stackrel{(\ref{equation5})}{=}\sum_{i \in B_p}Ow_i(N,w,\mathcal{B}),
\end{equation}
where the second and third equalities are derived from the fact that the Shapley value satisfies {\bf N}.
Now we may assume that $B_p$ is a non-null union in both $v$ and $w$. Let $M'=\{l \in M \mid B_l \text{ is non-null union in } v\}$. Since all the unions $B \in \mathcal{B}$ have the same productive identity and all non-null unions are mutually dependent in $v$ and $w$, then,
by {\bf E} and {\bf S} of the Shapley value \citep{1953_shapley}, we have
\begin{eqnarray*}
v(N)=v^{\mathcal{B}}(M)=\sum_{l\in M}Sh_l(M,v^{\mathcal{B}})\stackrel{(\ref{equation6})}{=}\sum_{l\in M'}Sh_l(M,v^{\mathcal{B}})=|M'|Sh_p(M,v^{\mathcal{B}}).
\end{eqnarray*}
Similarly, we have
\begin{eqnarray*}
w(N)=|M'|Sh_p(M,w^{\mathcal{B}}).
\end{eqnarray*}
Since $v(N)=w(N)$, by (\ref{equation5}), we have
\begin{equation*}\label{equation7}
	\sum_{i \in B_p}Ow_i(N,v,\mathcal{B})\stackrel{(\ref{equation5})}{=}Sh_p(M,v^{{\mathcal{B}}})
=Sh_p(M,w^{{\mathcal{B}}})\stackrel{(\ref{equation5})}{=}\sum_{i \in B_p}Ow_i(N,w,\mathcal{B}).
\end{equation*}

Now we show that $Ow_i(N, v, \mathcal{B})=Ow_i(N, w, \mathcal{B})$ for all $i\in B_p$. Let $NB_p(v)$ and $NB_p(w)$ be the set of all the null players of $B_p$ in $v$ and $w$, respectively. By the assumption that $NB_p(v)=NB_p(w)$ and {\bf N} of the Owen value,
If $i\in NB_p(v)=NB_p(w)$,  then $Ow_i(N, v, \mathcal{B})=0=Ow_i(N, w, \mathcal{B})$. For $i\in B_p\setminus NB_p(v)$, since the players in $B_p\setminus NB_p(v)$ are mutually dependent in $v$ and $w$,
 by {\bf SWU},
\begin{eqnarray*}
|B_p\setminus NB_p(v)|Ow_i(N,v,\mathcal{B})=\sum_{i \in B_p\setminus NB_p(v)}Ow_i(N,v,\mathcal{B})=\sum_{i \in B_p}Ow_i(N,v,\mathcal{B}).
\end{eqnarray*}
Similarly, we have
\begin{eqnarray*}
|B_p\setminus NB_p(w)|Ow_i(N,w,\mathcal{B})=\sum_{i \in B_p}Ow_i(N,w,\mathcal{B}).
\end{eqnarray*}
This implies that  $Ow_i(N, v, \mathcal{B})=Ow_i(N, w, \mathcal{B})$ for all $i\in B_p\setminus NB_p(v)$ ($=B_p\setminus NB_p(w)$).
Therefore, the Owen value satisfies {\bf IAG}.

{\em Uniqueness}.
    Let $\varphi$ be a value on $\mathcal{CG}^N$ that satisfies the above axioms. Let us show that
    $\varphi=Ow$. For any $(N, v, \mathcal{B})\in \mathcal{CG}^N$, we know that $v=\sum_{T\subseteq N, T \neq \emptyset} \lambda_T(v)u_{T}^{N}$. By {\bf A}, $\varphi(N,v, \mathcal{B})=\sum_{T\subseteq N, T \neq \emptyset}\varphi(N,\lambda_T(v)u_{T}^{N},\mathcal{B})$.
    Thus, it is sufficient to show that $\varphi(N,\alpha u_{T}^{N},\mathcal{B})=Ow(N,\alpha u_{T}^{N},\mathcal{B})$ for all $ T\subseteq N$ with $T\neq \emptyset$ where $\alpha$ is an arbitrary real constant. By {\bf N}, we only need to consider the cases when $\alpha u_{T}^{N}\neq 0$.

    For any  player $i \in N \setminus   T$, note that every $i\in N \setminus   T$ is a null player in $\alpha u_{T}^{N}$. Then, by {\bf N},
    \begin{equation}\label{equation9}
    	\varphi_i(N,\alpha u_{T}^{N},\mathcal{B})=Ow_i(N,\alpha u_{T}^{N},\mathcal{B})=0.
    \end{equation}

    It remains to show that $\varphi_i(N,\alpha u_{T}^{N},\mathcal{B})=Ow_i(N,\alpha u_{T}^{N},\mathcal{B})$ for all $i\in T$.
    Let $i\in T$ and let
    $$w_1=\alpha u^N_{T_1},\ \ \ w_2=\alpha u^N_{T_2},$$
    where
    \[T_1 = T \cup \bigg(\bigcup_{B \in \mathcal{B} \setminus \{B(i)\}: \ B \cap T \neq \emptyset} B\bigg), \ \ T_2=T_1 \cup B(i).\]
    Note that the players of $T$, $T_1$ and $T_2$ are all possible non-null players in $\alpha u_{T}^{N}$, $w_1$ and $w_2$, respectively.
By Observation \ref{ob1}, it is easily verified that $(N,\alpha u_{T}^{N},\mathcal{B})$ and $(N,w_1,\mathcal{B})$ are union-wise mutually dependent, and  all players of $B(i)$ have the same productive identity in both $\alpha u_{T}^{N}$ and $w_1$, and all the non-null players in $B(i)$, i.e., the players in $B(i)\cap T$ ($=B(i)\cap T_1$) are mutually dependent in both $\alpha u_{T}^{N}$ and $w_1$. Similarly, one sees that   $(N,w_1,\mathcal{B})$ and $(N,w_2,\mathcal{B})$ are union-wise mutually dependent, and for any $B \in \mathcal{B} \setminus \{B(i)\}$, all players of $B$ have the same productive identity in both $w_1$ and $w_2$, and
all the non-null players of $B$, i.e., the players in $B\cap T_1$ ($=B\cap T_2$)  are mutually dependent in both $w_1$ and $w_2$.
Then, by {\bf IAG}, we have
\begin{eqnarray}
&&\varphi_j(N,\alpha u_{T}^{N},\mathcal{B})=\varphi_j(N,w_1,\mathcal{B})\ \ \ \mbox{for all $i \in B(i)$}. \label{equation10}\\
&&\varphi_j(N,w_1,\mathcal{B})=\varphi_j(N,w_2,\mathcal{B})\ \ \ \mbox{for all $j \in B \in \mathcal{B} \setminus \{B(i)\}$}. \label{equation11}
\end{eqnarray}

We claim that  $\varphi_j(N,w_2,\mathcal{B})=Ow_j(N,w_2,\mathcal{B})$ for all $j\in N$.
For any $j\in N \setminus   T_2$, note that $j$ is a null player in $w_2$. Then, by {\bf N},
$\varphi_j(N,w_2,\mathcal{B})=Ow_j(N,w_2,\mathcal{B})=0$.
For any $j\in T_2$,
note that all players in $T_2$ are mutually dependent in $w_2$, and for any $B \in \mathcal{B}$ with $B \cap T_2 \neq \emptyset$,
$B \subseteq T_2$. By {\bf E}, {\bf N}, {\bf SWU},  and {\bf MBU$^-$}, we have
\begin{eqnarray*}
 w_2(N)&\stackrel{\bf E}{=}& \sum_{j \in N}\varphi_j(N,w_2,\mathcal{B})
    	\stackrel{\bf N}{=}\sum_{j \in T_2}\varphi_j(N,w_2,\mathcal{B}) \nonumber \\
    	&\stackrel{\bf SWU,\, MBU^-}{=}&
    	|\mathcal{B}_{|T_2}||B(j)|\varphi_j(N,w_2,\mathcal{B}).
    \end{eqnarray*}
 Similarly, we have $w_2(N)=|\mathcal{B}_{|T_2}||B(j)|Ow_j(N,w_2,\mathcal{B})$. This implies that $\varphi_j(N,w_2,\mathcal{B})=Ow_j(N,w_2,\mathcal{B})$ for all $j\in T_2$.
 Consequently, $\varphi_j(N,w_2,\mathcal{B})=Ow_j(N,w_2,\mathcal{B})$ for all $j\in N$.

Now, combining (\ref{equation10}), (\ref{equation11}) and {\bf E}, we obtain
    \begin{eqnarray*}
    	\sum_{j \in B(i)}\varphi_j(N,\alpha u_{T}^{N},\mathcal{B})&\stackrel{(\ref{equation10})}{=}&\sum_{j \in B(i)}\varphi_j(N,w_1,\mathcal{B}) \nonumber \\
    	&\stackrel{{\bf E }}{=}&
    	w_1(N)-\sum_{j \in \mathcal{B} \setminus   \{B(i)\}}\varphi_j(N,w_1,\mathcal{B}) \nonumber \\ &\stackrel{(\ref{equation1}),\, (\ref{equation11})}{=}&w_2(N)-\sum_{j \in \mathcal{B} \setminus   \{B(i)\}}\varphi_j(N,w_2,\mathcal{B}) \nonumber \\
    	&\stackrel{{\bf E }}{=}&\sum_{j \in B(i)}\varphi_j(N,w_2,\mathcal{B}).
    \end{eqnarray*}
Similarly, we have $\sum_{j \in B(i)}Ow_j(N,\alpha u_{T}^{N},\mathcal{B})=
    	\sum_{j\in B(i)}Ow_j(N,w_2,\mathcal{B})$. Thus
    \begin{equation}\label{equation12}
\sum_{j \in B(i)}\varphi_j(N,\alpha u_{T}^{N},\mathcal{B})=\sum_{j \in B(i)}Ow_j(N,\alpha u_{T}^{N},\mathcal{B}).
\end{equation}
Note that all players in $T$ are mutually dependent in $\alpha u_{T}^{N}$. Then, by (\ref{equation12}), {\bf SWU} and {\bf N},  we have
    \begin{eqnarray*}
   {|B(i) \cap T|}\varphi_i(N,\alpha u_{T}^{N},\mathcal{B})&\stackrel{{\bf SWU},{\bf N}} {=}&\sum_{j \in B(i)}\varphi_j(N,\alpha u_{T}^{N},\mathcal{B})\stackrel{(\ref{equation12})} {=}\sum_{j \in B(i)}Ow_j(N,w_2,\mathcal{B})\\
   &=&|B(i) \cap T|Ow_i(N,\alpha u_{T}^{N},\mathcal{B}), \ \ \mbox{for $i\in B(i)\cap T$}.
    \end{eqnarray*}
    Therefore, combining with (\ref{equation9}), we conclude that $\varphi(N,\alpha u_{T}^{N},\mathcal{B})=Ow(N,\alpha u_{T}^{N},\mathcal{B})$.
\end{proof}

%%%%%%%%%%%%%%%%%%%%%%%%%%%%%%%%%%%%%%%%%%%%%%%%%%%%%%%%%%%%%%%%%%%%
%\subsection{Super inter-unions marginality}\label{section3.2}

%Since the axiom {\bf IUM$^+$} has taken into account of the coalition structure when describing the marginal contribution of players, combining {\bf E}, {\bf SWU}, {\bf MBU$^-$} and {\bf IUM$^+$}, we are now ready to provide another new characterization of the Owen value.
We turn now to formulating the Owen value axiomatically by replacing
the axioms of symmetry between unions ({\bf SBU}) and marginality ({\bf M}) in the axiomatization \citep{2007_kh} with
axioms of weak mutually dependent between unions ({\bf MBU$^-$}) and super inter-unions marginality ({\bf IUM$^+$}).

\begin{theorem}\label{thm2}
    The Owen value is the unique value on $\mathcal{CG}^N$  that satisfies efficiency {\rm({\bf E})}, symmetry within unions{\rm({\bf SWU})}, weak mutually dependent between unions {\rm({\bf MBU$^-$})} and super inter-unions marginality {\rm({\bf IUM$^+$})}.
\end{theorem}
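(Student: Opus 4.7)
The plan is to establish existence by direct verification and uniqueness by a Young-style induction on $|I(v)|:=|\{T:\lambda_T(v)\neq 0\}|$, using IUM$^+$ in the role that ordinary marginality plays in the Khmelnitskaya--Yanovskaya characterization.

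For \emph{existence}, E and SWU hold for $Ow$ classically. MBU$^-$ follows from SBU (which $Ow$ satisfies), since highly mutually dependent unions are mutually dependent by the observation recorded right before equation~(\ref{equation5}), and hence symmetric in the quotient. IUM$^+$ is visible in formula~(\ref{equation4}): the Owen payoff of $i$ is a fixed convex combination of $i$'s inter-unions marginal contributions, so coinciding inter-unions contributions in two games force identical Owen payoffs for that player.

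For \emph{uniqueness}, let $\varphi$ satisfy the four axioms. I first observe $\varphi(N,\mathbf{0},\mathcal{B})=0$: in the null game every pair of unions is highly mutually dependent, so MBU$^-$ equates all union-sums, E drives each to $0$, and SWU distributes this uniformly within each union. The null player property then follows at once from IUM$^+$, since a null player's IUMCs all vanish and so coincide with those in the null game. The induction is on $|I(v)|$; the base $|I(v)|=0$ is handled above. In the inductive step I write $T^*(v):=\bigcap_{T\in I(v)}T$; for $i\notin T^*(v)$, pick $T_0\in I(v)$ with $i\notin T_0$, so that $w:=v-\lambda_{T_0}(v)u_{T_0}^N$ has strictly fewer nonzero dividends and the summand $u_{T_0}^N$ contributes $0$ to every IUMC of $i$. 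IUM$^+$ together with the inductive hypothesis then gives $\varphi_i(v)=\varphi_i(w)=Ow_i(w)=Ow_i(v)$, the last equality using IUM$^+$ for $Ow$.

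The delicate case is $i\in T^*(v)$. Efficiency plus the preceding step already yields $\sum_{j\in T^*(v)}\varphi_j(v)=\sum_{j\in T^*(v)}Ow_j(v)$, and SWU equates $\varphi$-values inside each $T^*(v)\cap B_p$ (necessary players are mutually dependent, hence symmetric). What remains is to equate $\sum_{i\in B_p}\varphi_i(v)$ with $\sum_{i\in B_p}Ow_i(v)$ for every $p\in P^*:=\{p:B_p\cap T^*(v)\neq\emptyset\}$; this is the main obstacle, because the unions in $P^*$ need not be highly mutually dependent in $v$, so MBU$^-$ cannot be invoked directly. To circumvent this, for each fixed $i\in T^*(v)\cap B_p$ I plan to compare $v$ with the auxiliary game $\tilde v^{(p)}:=\sum_{T\in I(v)}\lambda_T(v)u_{T^{(p)}}^N$, where $T^{(p)}:=Q(R_T^p)\cup(T\cap B_p)$ and $R_T^p:=\{q\in M\setminus\{p\}:B_q\cap T\neq\emptyset\}$. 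A direct check shows $T$ and $T^{(p)}$ produce identical IUMCs for the fixed $i$, so IUM$^+$ gives $\varphi_i(v)=\varphi_i(\tilde v^{(p)})$ and likewise for $Ow$. In $\tilde v^{(p)}$ every union $B_q$ with $q\neq p$ sits entirely inside or entirely outside each dividend-supporting set, so iterating the same IUMC-preserving replacement on players of those fully active unions yields, in finitely many steps, a game whose only mixed-activity union is $B_p$. On that terminal game, MBU$^-$ equates the fully active unions' totals, the null player property kills the inactive ones, and E closes the system, pinning down $\sum_{i\in B_p}\varphi_i$ at the Owen value. The hard part will be verifying this chain of IUM$^+$ replacements rigorously and confirming that the terminal game is indeed determined exactly by E, SWU, MBU$^-$ and the null player property.
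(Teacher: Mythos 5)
Your existence argument, the induction framework, the base case, and the treatment of players outside $T^*(v)=\delta(v)$ all coincide with the paper's proof, and your key computation is correct: replacing each dividend-carrying set $T$ by $T^{(p)}=Q(R^p_T)\cup(T\cap B_p)$ leaves every inter-unions marginal contribution of every player of $B_p$ unchanged, so {\bf IUM$^+$} links $\sum_{j\in B_p}\varphi_j(N,v,\mathcal{B})$ to the auxiliary game. The gap is in the endgame for $i\in T^*(v)$. In your terminal game $\tilde v^{(p)}$ the union $B_p$ is, by design, still only partially contained in the supporting sets, so $B_p$ is \emph{not} highly mutually dependent with the other unions that meet every supporting set: a player $j\in B_p$ missing from some supporting set and a necessary player $k\in B_q$ fail the mutual-dependence test (take $S=N\setminus\{j,k\}$, so that $\tilde v^{(p)}(S\cup\{k\})-\tilde v^{(p)}(S)=\tilde v^{(p)}(N)\neq 0$ in general). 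Hence {\bf MBU$^-$} only equates the totals of the fully active necessary unions \emph{other than} $B_p$, and efficiency yields a single relation $kX+Y=v(N)-c$ with $c$ known, $X$ the common total of those unions and $Y=\sum_{j\in B_p}\varphi_j$; the identical relation holds for $Ow$, so nothing is pinned down unless $k=0$. The proposed iteration cannot close this: an {\bf IUM$^+$}-preserving replacement targeting another union $B_q$ necessarily fills out $B_p$ inside the supporting sets it modifies, after which the $B_p$-payoffs are no longer linked to $\sum_{j\in B_p}\varphi_j(N,v,\mathcal{B})$, and at each stage only the targeted union's players keep their inter-unions marginal contributions.

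The paper closes exactly this hole with a two-stage construction. It first passes to $v'$ by adjoining to every $T\in I(v)$ all unions of $\mathcal{B}_\delta$ except $B(i)$; since those unions already meet every $T$, this preserves the inter-unions marginal contributions both of the players of $B(i)$ and of the players of all unions outside $\mathcal{B}_\delta$. It then passes to $v''$ by adjoining $B(i)$ as well, preserving the contributions of everyone outside $B(i)$. In $v''$ all unions of $\mathcal{B}_\delta$ are highly mutually dependent and internally mutually dependent, so {\bf E}, {\bf SWU}, {\bf MBU$^-$}, together with the payoffs of the remaining players already identified via the $i\notin\delta(v)$ step and {\bf IUM$^+$}, determine $\varphi(N,v'',\mathcal{B})$ completely; one then walks back $v''\to v'\to v$ applying {\bf IUM$^+$} to the complementary player sets at each step. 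If you wish to keep your per-coalition filling $T\mapsto T^{(p)}$, you would still need an analogous second game in which $B_p$ itself is filled out, plus an argument transporting the other unions' totals back to $\tilde v^{(p)}$; as written, the totals of the other necessary unions in your terminal game are never identified with their Owen values, so the system remains underdetermined.
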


\begin{proof}
    {\em Existence}.  We have already seen that the Owen value satisfies {\bf E}, {\bf SWU} and {\bf MBU$^-$} in Theorem \ref{thm1}. The formula (\ref{equation4}) implies that the Owen value satisfies  {\bf IUM$^+$}.

    {\em Uniqueness}.   Let $\varphi$ be a value on $\mathcal{CG}^N$ that obeys the above axioms. Set $I(v)=\{T \subseteq N \mid \lambda_T(v) \neq 0\}$. The proof of uniqueness is by induction on $|I(v)|$.

    {\em Induction basis}: 	If $|I(v)|=0$, then all players in $N$ are null and mutually dependent in $v$. By {\bf E}, {\bf SWU} and {\bf MBU$^-$}, $\varphi_i(N,v,\mathcal{B})=Ow_i(N,v,\mathcal{B})=0$ for all $i \in N$.

    {\em Induction hypothesis }({\bf IH}): Assume that $\varphi(N,v,\mathcal{B})=Ow(N,v,\mathcal{B})$ for any  $(N,v,\mathcal{B}) \in {\mathcal{CG}}^N$ with $|I(v)| \le t$.

    {\em Induction step}: We show that $\varphi(N,v,\mathcal{B})=Ow(N,v,\mathcal{B})$ for all $(N,v,\mathcal{B}) \in {\mathcal{CG}}^N$ with $|I(v)| = t+1$. Let $\delta(v)=\{i \in N \mid \forall \ T \in I(v), i \in T\}$, and $\mathcal{B}_{\delta}=\{B \in \mathcal{B}:\ B \cap \delta(v) \neq \emptyset \}$. By (\ref{equation1}) and Observation \ref{ob1} (ii), we see that each player in $\delta(v)$ is  necessary  in $v$.

    We distinguish the following two cases.

    \textbf{Case 1.} For any player $i \in N \setminus   \delta(v)$, without loss of generality, suppose  $i \in B_p \in \mathcal{B}$. Then there exists at least one set $T \in I(v)$ such that $i \notin T$. Define a game \[w=v-\lambda_{T}(v)u^N_{T}.\] Note that $i$ is a null player in  $v-w$, so  $v(Q(R) \cup S\cup\{i\}) - v(Q(R) \cup S) = w(Q(R) \cup S\cup\{i\}) - w(Q(R) \cup S)$ for any $R \subseteq M \setminus   \{p\} $ and $S \subseteq B_p \setminus   \{i\}$, then, by {\bf IUM$^+$} and {\bf IH}, we have
    \begin{equation*}
    	\varphi_i(N,v,\mathcal{B})\stackrel{{\bf IUM^+}}{=}\varphi_i(N,w,\mathcal{B})\stackrel{{\bf IH}}{=}Ow_i(N,w,\mathcal{B})\stackrel{{\bf IUM^+}}{=}Ow_i(N,v,\mathcal{B}).
    \end{equation*}
    Hence,
\begin{equation}\label{equation14}
\varphi_i(N,v,\mathcal{B})=Ow_i(N,v,\mathcal{B})\ \  \text{for all } i \in N \setminus   \delta(v).
\end{equation}
    \textbf{Case 2.} For any player $i \in \delta(v)$, define two games
     \[v'=\sum_{T \in I(v)}\lambda_T(v)u^N_{T'},\ \ \ v''=\sum_{T \in I(v)}\lambda_T(v)u^N_{T''},\] where
     \[T'=T \cup \Big(\bigcup_{B \in \mathcal{B}_{\delta} \setminus \{B(i)\}}B\Big), \ \ \ T''=T \cup \Big(\bigcup_{B \in \mathcal{B}_{\delta}}B\Big).
     \]

 It is easy to see that each player $j \in \bigcup_{B\in \mathcal{B}\setminus \mathcal{B}_{\delta}}B\cup \{B(i)\}$ has the same inter-unions marginal contribution in both $v$ and $v'$, while all players $j \in B \in \mathcal{B} \setminus \{B(i)\}$ have the same inter-unions marginal contribution in both $v'$ and $v''$. Then, by {\bf IUM$^+$}, we have that
    \begin{eqnarray}\label{equation15}
    &&\varphi_j(N,v,\mathcal{B}) = \varphi_j(N,v',\mathcal{B}),\\ \nonumber
    &&Ow_j(N,v,\mathcal{B}) = Ow_j(N,v',\mathcal{B}) \text{, for all } j \in \bigcup_{B\in \mathcal{B}\setminus \mathcal{B}_{\delta}}B\cup \{B(i)\},
    \end{eqnarray}
    and
     \begin{eqnarray}\label{equation16}
    &&\varphi_j(N,v',\mathcal{B}) = \varphi_j(N,v'',\mathcal{B}), \\ \nonumber
    &&Ow_j(N,v',\mathcal{B}) =Ow_j(N,v'',\mathcal{B}) \text{, for all } j \in B \in \mathcal{B} \setminus \{B(i)\}.
    \end{eqnarray}

    For the game $v''$, we show that
    \begin{equation}\label{eq14}
    \varphi_j(N,v'',\mathcal{B})=Ow_j(N,v'',\mathcal{B}) \ \ \mbox{for all $j\in N$}.
    \end{equation}

   For any $ j \in \bigcup_{B\in \mathcal{B}\setminus \mathcal{B}_{\delta}}B$, by  (\ref{equation14}), (\ref{equation15}), and (\ref{equation16}), we have that
    \begin{equation*}
    	\varphi_j(N,v'',\mathcal{B})\stackrel{(\ref{equation16})}{=}\varphi_j(N,v',\mathcal{B})
    \stackrel{(\ref{equation15})}{=}\varphi_j(N,v,\mathcal{B})
    \stackrel{{(\ref{equation14})}}{=}Ow_j(N,v,\mathcal{B}).
    \end{equation*}
 Similarly, we have $Ow_j(N,v'',\mathcal{B})=Ow_j(N,v,\mathcal{B})$ for all $ j \in \bigcup_{B\in \mathcal{B}\setminus \mathcal{B}_{\delta}}B$. Thus $\varphi_j(N,v'',\mathcal{B})=Ow_j(N,v'',\mathcal{B})$ for all $ j \in \bigcup_{B\in \mathcal{B}\setminus \mathcal{B}_{\delta}}B$.
 We next consider the players $j \in \bigcup_{B\in \mathcal{B}_{\delta}}B$.
 Note that all the unions in $\mathcal{B}_{\delta}$ are highly mutually dependent in $v''$ and all players in every $B\in \mathcal{B}_{\delta}$ are mutually dependent in $v''$.
 Then, by {\bf E}, {\bf SWU}, and {\bf MBU$^-$},  we have
    \begin{align*}\label{equation18}
   |B_{\delta}||B(j)|\varphi_j(N,v'',\mathcal{B})&\stackrel{\bf SWU,\, MBU^-}{=}\sum_{j \in \bigcup_{B\in  \mathcal{B}_{\delta}}B}\varphi_j(N,v'',\mathcal{B})\\
   &\stackrel{\bf E}{=}v''(N)-\sum_{k \in \bigcup_{B\in \mathcal{B}\setminus \mathcal{B}_{\delta}}B}\varphi_k(N,v'',\mathcal{B})\\
    &\stackrel{(\ref{eq14})}{=}v''(N)-\sum_{k \in \bigcup_{B\in \mathcal{B}\setminus \mathcal{B}_{\delta}}B}Ow_k(N,v'',\mathcal{B})\\
    &\stackrel{\bf E}{=} \sum_{j \in \bigcup_{B\in  \mathcal{B}_{\delta}}B}Ow_j(N,v'',\mathcal{B})\\
   &\stackrel{\bf SWU,\, MBU^-}{=}|B_{\delta}||B(j)|Ow_j(N,v'',\mathcal{B}).
    \end{align*}
 Hence $\varphi_j(N,v'',\mathcal{B})=Ow_j(N,v'',\mathcal{B})$ for $j \in \bigcup_{B\in \mathcal{B}_{\delta}}B$, and (\ref{eq14}) follows.

   For any player $j \in B \in \mathcal{B} \setminus \{B(i)\}$,  by  (\ref{equation16}),   we have
    \begin{equation}\label{eq15}
    	\varphi_j(N,v',\mathcal{B})\stackrel{(\ref{equation16})}{=}\varphi_j(N,v'',\mathcal{B})\stackrel{(\ref{eq14})} {=}Ow_j(N,v'',\mathcal{B})\stackrel{(\ref{equation16})}{=}Ow_j(N,v',\mathcal{B}).
    \end{equation}

    For any player $j \in B(i) \setminus   \delta(v)$, by (\ref{equation14}), and (\ref{equation15}), we obtain
    \begin{equation}\label{eq16}
    	\varphi_j(N,v',\mathcal{B})\stackrel{(\ref{equation15})}{=}\varphi_j(N,v,\mathcal{B})
    \stackrel{{(\ref{equation14})}}{=}Ow_j(N,v,\mathcal{B})\stackrel{(\ref{equation15})}{=}Ow_j(N,v',\mathcal{B}).
    \end{equation}
    Note that all players in $B(i) \cap \delta(v)$ are mutually dependent in $v'$. Then,  by {\bf E}, {\bf SWU},
    (\ref{equation15}), (\ref{eq15}), and (\ref{eq16}), we obtain
    \begin{eqnarray*}
    	|B(i) \cap \delta(v)|\varphi_i(N,v,\mathcal{B}) &\overset{(\ref{equation15})}{=}& |B(i) \cap \delta(v)|\varphi_i(N,v',\mathcal{B})\\
    	&\stackrel{{\bf E,\, SWU}}{=}&
    	v'(N)-\sum_{j \in N \setminus   (B(i) \cap \delta(v))}\varphi_j(N,v',\mathcal{B}) \\
   &\stackrel{(\ref{eq15}),\, (\ref{eq16})}{=}& v'(N)-\sum_{j \in N \setminus   (B(i) \cap \delta(v))}Ow_j(N,v',\mathcal{B}) \\
   &\stackrel{{\bf E,\, SWU}}{=}& |B(i) \cap \delta(v)|Ow_i(N,v',\mathcal{B})\\
   &\overset{(\ref{equation15})}{=}&|B(i) \cap \delta(v)|Ow_i(N,v,\mathcal{B}).
    \end{eqnarray*}
    Hence, $\varphi_i(N,v,\mathcal{B})=Ow_i(N,v,\mathcal{B})$  for any  player $i \in B(i) \cap \delta(v)$. Thus $\varphi_i(N,v,\mathcal{B})=Ow_i(N,v,\mathcal{B})$  for all  player $i \in \delta(v)$.
Therefore, combining with (\ref{equation14}),  we conclude that $\varphi(N,v,\mathcal{B})=Ow(N,v,\mathcal{B})$ for all $i\in N$.
\end{proof}

%%%%%%%%%%%%%%%%%%%%%%%%%%%%%%%%%%%%%%%%%%%%%%%%%%%%%%%%%%%%%%%

Finally, we show that the axiom of differential marginality of inter-mutually dependent unions ({\bf DMU$_{md}^-$}), jointly with the axioms
of efficiency ({\bf E}),  differential marginality of mutually dependent players within unions ({\bf UDM$_{md}$}) and the null player out property ({\bf NPO}), characterizes the Owen value for variable
 player sets.

%By combining the axiom {\bf DMU$_{md}^-$} with  {\bf E}, {\bf NPO} and {\bf UDM$_{md}$}, we provide another new characterization of the Owen value for games on varying players set.

\begin{theorem}\label{theorem3}
    The Owen value is the unique value on $\mathcal{CG}$ that satisfies efficiency {\rm({\bf E})},  differential marginality of mutually dependent players within unions {\rm({\bf UDM$_{md}$})}, differential marginality of inter-mutually dependent unions {\rm({\bf DMU$_{md}^-$})} and the null player out property {\rm({\bf NPO})}.
\end{theorem}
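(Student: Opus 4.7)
The Owen value satisfies {\bf E} and {\bf NPO} by \cite{wi1992} and {\bf UDM$_{md}$} by \cite{2024_he}. Since any pair of highly mutually dependent unions is also mutually dependent (as noted in Subsection 2.2), the premise of {\bf DMU$_{md}^-$} implies that of {\bf DMU$_{md}$}; as \cite{2024_he} showed that the Owen value satisfies {\bf DMU$_{md}$}, property {\bf DMU$_{md}^-$} follows for free.

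\textbf{Uniqueness.} Let $\varphi$ satisfy the four axioms; the plan is to induct on $|I(v)|$, the number of coalitions with non-zero Harsanyi dividend. In the base case $|I(v)|=0$ we have $v = \mathbf{0}$, and iterated {\bf NPO} reduces the game to a singleton on which {\bf E} forces $\varphi_i = 0 = Ow_i$ for every $i \in N$. For the inductive step with $|I(v)| = t+1$, I first dispose of null players of $v$: {\bf NPO} combined with {\bf E} gives $\varphi_i(N,v,\mathcal{B}) = Ow_i(N,v,\mathcal{B}) = 0$ for every such $i$. Then fix any $T^{*} \in I(v)$ and set $w = v - \lambda_{T^{*}}(v)\, u_{T^{*}}^{N}$, so that $|I(w)| = t$; the inductive hypothesis yields $\varphi(N,w,\mathcal{B}) = Ow(N,w,\mathcal{B})$. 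Writing $\delta_i := \varphi_i(N,v,\mathcal{B}) - Ow_i(N,v,\mathcal{B})$, the goal reduces to showing $\delta \equiv 0$; {\bf E} gives $\sum_{i \in N}\delta_i = 0$, and the previous step gives $\delta_i = 0$ on null players of $v$. Because $v - w = \lambda_{T^{*}}(v)\, u_{T^{*}}^{N}$, two players $i,j$ in the same union are mutually dependent in $v - w$ iff they are both in $T^{*}$ or both outside $T^{*}$, so applying {\bf UDM$_{md}$} to both $\varphi$ and $Ow$ yields $\delta_i = \delta_j$ for every such pair. Analogously, $B_p$ and $B_q$ are highly mutually dependent in $v - w$ iff $B_p, B_q \subseteq T^{*}$ or $B_p \cap T^{*} = B_q \cap T^{*} = \emptyset$, and {\bf DMU$_{md}^-$} then produces $\sum_{i \in B_p}\delta_i = \sum_{i \in B_q}\delta_i$ on those pairs.

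\textbf{Main obstacle.} The critical difficulty---which distinguishes this proof from the {\bf DMU$_{md}$}-based characterization of \cite{2024_he}---is that {\bf DMU$_{md}^-$} supplies cross-union sum equalities only in the highly restrictive configuration above, rather than across every pair of unions mutually dependent at the quotient level. To overcome this I would iterate the inductive step across all $T^{*} \in I(v)$ so as to accumulate constraints, and, for any pair $(B_p, B_q)$ not jointly contained in some existing dividend support, build auxiliary games in the spirit of the $T_1, T_2$ construction from the proof of Theorem~\ref{thm1}: enlarge a dividend support $T$ to its union-saturation $T \cup \bigl(\bigcup_{B \in \mathcal{B}:\ B \cap T \neq \emptyset} B\bigr)$, which is a disjoint union of entire unions and therefore triggers {\bf DMU$_{md}^-$} on every pair of unions meeting $T$. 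Propagating the resulting sum equalities along this chain of auxiliary games---with the inductive hypothesis supplying $\varphi = Ow$ at each intermediate step---combined with the within-union equalities from {\bf UDM$_{md}$} and the global constraint $\sum_i \delta_i = 0$ from {\bf E}, should force $\delta \equiv 0$ on every union and complete the induction.
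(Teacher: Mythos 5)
Your existence argument is fine, and the constraints you extract in the inductive step are correctly computed: comparing $v$ with $w=v-\lambda_{T^*}(v)u^N_{T^*}$, {\bf UDM$_{md}$} indeed only yields $\delta_i=\delta_j$ for same-union pairs both inside or both outside $T^*$, and {\bf DMU$_{md}^-$} only bites when both unions lie entirely inside or entirely outside $T^*$. But this is exactly where the proof breaks: these are purely \emph{relative} equalities, and together with $\sum_i\delta_i=0$ they do not force $\delta\equiv 0$. Take $N=\{1,2,3,4\}$, $\mathcal{B}=\{\{1,2\},\{3,4\}\}$, $v=u_{\{1,3\}}$: your constraints give only $\delta_2=\delta_4=0$ and $\delta_1+\delta_3=0$, leaving one free parameter. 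Your proposed repair does not close this. The union-saturated game $\lambda_{T}(v)u^N_{T\cup(\cup_{B\cap T\neq\emptyset}B)}$ has the same number of nonzero dividends as $\lambda_T(v)u^N_T$, so the induction hypothesis does not apply to it; and, unlike Theorems \ref{thm1} and \ref{thm2}, this axiom system contains no axiom ({\bf IAG} or {\bf IUM$^+$}) that links a game to its saturation, so there is no way to ``propagate'' equalities along that chain.

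The paper's proof supplies the two missing ideas. First, for $i\notin\delta(v)$ (where $\delta(v)$ is the set of players lying in every coalition with nonzero dividend) it adjoins a fresh player $l'\in\mathcal{U}\setminus N$ to $B(i)$; since $l'$ is null its payoff is pinned to $0$, and since $i$ and $l'$ are mutually dependent in the difference of the lifted games, {\bf UDM$_{md}$} converts the relative equality $\delta_i=\delta_{l'}$ into the \emph{absolute} statement $\varphi_i(N,v,\mathcal{B})=Ow_i(N,v,\mathcal{B})$; the transfer back and forth between player sets is exactly what the full strength of {\bf NPO} (beyond just implying {\bf N}) is for, and it is why the theorem is stated on $\mathcal{CG}$ rather than $\mathcal{CG}^N$. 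Second, for $i\in\delta(v)$ whose union contains no ``intermediate'' player, the paper first deletes the null players $\beta(v)$ via {\bf NPO}; only after this shrinking are the residual unions $B_p\cap\delta(v)$, $B_q\cap\delta(v)$ highly mutually dependent \emph{in $v$ itself} (their members are necessary players), so that {\bf DMU$_{md}^-$} can be applied to the pair $(v,\mathbf{0})$ rather than to $(v,w)$. Your sketch never obtains absolute values for non-null players outside $\delta(v)$ and never invokes {\bf DMU$_{md}^-$} against the null game on a reduced player set, so the induction cannot be completed as written.
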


\begin{proof}
    {\em Existence}.   According to \cite{2024_he}, it is straightforward to know that the Owen value satisfies {\bf E}, {\bf NPO}, {\bf UDM$_{md}$} and {\bf DMU$_{md}^-$}.

    {\em Uniqueness}.   Let $\varphi$ be a value on $\mathcal{CG}$ that obeys the above axioms. We have to show that $\varphi=Ow$.  Let $I(v)=\{T \subseteq N \mid \lambda_T(v) \neq \emptyset\}$. We prove the assertion by induction on $|I(v)|$.

    It is easy to see  that {\bf NPO} and {\bf E} imply {\bf N}. So  $\varphi$  satisfies {\bf N}.

    {\em Induction basis}: If $|I(v)|=0$, then each $i \in N$ is a null player in $v$. So, by {\bf N}, $\varphi_i(N,v,\mathcal{B})=Ow_i(N,v,\mathcal{B})=0$ for all $i \in N$.

    {\em Induction hypothesis} ({\bf IH}): Assume that $\varphi(N,v,\mathcal{B})=Ow(N,v,\mathcal{B})$ for any $(N,v,\mathcal{B}) \in \mathcal{CG}$ with $|I(v)|\le t$.

    {\em Induction step} :  We show that $\varphi(N,v,\mathcal{B})=Ow(N,v,\mathcal{B})$ for any $(N,v,\mathcal{B}) \in \mathcal{CG}$ with $|I(v)| = t+1$. Let $\delta(v) = \{i \in N \mid \forall \, T \in I(v), i \in T\}$, $\beta(v) = \{i \in N \mid \forall\, T \in I(v), i \notin T\}$. Then $\delta(v)\cap \beta(v)=\emptyset$. Obviously, each player in $\delta(v)$ is necessary in $v$, while each player in $\beta(v)$ is null in $v$.

     For any player $i \in N \setminus   \delta(v)$,  there exists one set $T \in I(v) $ such that $T \not\ni i$. Take a player $l' \in \mathcal{U} \setminus   N$ arbitrarily. We construct the  coalition structure $\mathcal{B}'= \mathcal{B} \setminus \{B(i)\} \cup \{B(i) \cup \{l'\}\}$,
and define
    \[w = \sum_{T \in I(v)}\lambda_T(v)u_{T}^{N \cup \{l'\}}, \ \ w' = \sum_{T \in I(v) \setminus   \{T\}}\lambda_T(v)u_{T}^{N \cup \{l'\}},\ \ v' = \sum_{T \in I(v) \setminus   \{T\}}\lambda_T(v)u_{T}^{N}.\]
    Then $I(v')<I(v)$. Note that $i$ and $l'$ are mutually dependent in $w-w'$, and $l'$ is a null player in $w$ and $w'$. By the induction hypothesis, {\bf NPO}, {\bf N}, {\bf UDM$_{md}$} and (\ref{equation3}), we have
    \begin{align*}
    	\varphi_i(N,v,\mathcal{B})
    	&\stackrel{\bf NPO}{=} \varphi_i(N \cup \{l'\},w,{\mathcal{B}}^{'}) \nonumber \\
    	&\stackrel{\bf N}{=} \varphi_i(N \cup \{l'\},w,{\mathcal{B}}^{'}) - \varphi_{l'}(N \cup \{l'\},w,{\mathcal{B}}^{'}) \nonumber \\
    	&\stackrel{\bf UDM_{md}}{=} \varphi_i(N \cup \{l'\},w',{\mathcal{B}}^{'}) - \varphi_{l'}(N \cup \{l'\},w',{\mathcal{B}}^{'}) \nonumber \\
    	&\stackrel{\bf N}{=} \varphi_i(N \cup \{l'\},w',{\mathcal{B}}^{'})\stackrel{\bf NPO}{=}\varphi_i(N,v',\mathcal{B}) \nonumber \\
    	&\stackrel{{\bf IH}}{=} Ow_i(N,v', \mathcal{B}) \stackrel{(\ref{equation3})}{=} Ow_i(N,v, \mathcal{B}).
    \end{align*}
Hence, \begin{align}\label{eq17}
    	\varphi_i(N,v,\mathcal{B})=Ow_i(N,v, \mathcal{B}),  \ \ \mbox{for all $i\in N\setminus \delta(v)$}.
\end{align}

It remains to show that $\varphi_i(N,v,\mathcal{B})=Ow_i(N,v, \mathcal{B})$  for all $i\in  \delta(v)$.
Let $i\in \delta(v)$.
    We distinguish the following two cases.

    \textbf{Case 1.}     $B(i) \setminus   (\delta(v) \cup \beta(v))\neq \emptyset$.
Let \(j \in B(i) \setminus (\delta(v) \cup \beta(v))\). Then there exists  \(T \in I(v)\) such that \(j \in T\). Define
    \[v'' = v - \lambda_{T}(v)u^N_{T}.\]
 Note that $I(v'')<I(v)$, and  $i,j\in T$  are mutually dependent in $v-v''$. Thus, combining {\bf UDM$_{md}$} and (\ref{eq17}) with {\bf A} and {\bf SWU} of the Owen value, we have
    \begin{align*}\label{equation23}
    	\varphi_i(N,v,\mathcal{B})-Ow_i(N,v'',\mathcal{B})
    	&\stackrel{\bf IH}{=}
    	\varphi_i(N,v,\mathcal{B})-\varphi_i(N,v'',\mathcal{B}) \nonumber \\
    	&\stackrel{\bf UDM_{md}}{=}
    	\varphi_j(N,v,\mathcal{B})-\varphi_j(N,v'',\mathcal{B}) \nonumber \\
    	&\stackrel{ (\ref{eq17}),\, {\bf IH}}{=}Ow_j(N,v, \mathcal{B})-Ow_j(N,v'', \mathcal{B})\nonumber \\
    	&\stackrel{\bf A}{=}Ow_j(N,v-v'', \mathcal{B})\stackrel{\bf SWU}{=}Ow_i(N,v-v'', \mathcal{B})  \nonumber \\
    	&\stackrel{\bf A}{=}	Ow_i(N,v,\mathcal{B})-Ow_i(N,v'',\mathcal{B}).
    \end{align*}
    Therefore, $\varphi_i(N,v,\mathcal{B})=Ow_i(N,v,\mathcal{B})$.

    \textbf{Case 2.}  \( B(i) \setminus   (\delta(v) \cup \beta(v)) = \emptyset \).
  Let
    $\mathcal{B}'= \{ B \in \mathcal{B}_{\delta} \mid B \setminus   (\delta(v) \cup \beta(v)) = \emptyset \}$ where $\mathcal{B}_{\delta}=\{B \in \mathcal{B}:\ B \cap \delta(v) \neq \emptyset \}$. Then $B(i)\in \mathcal{B}'$, and for each $B\in \mathcal{B}'$, we have
    $B=(B\cap \delta(v))\cup(B\cap \beta(v))$.
    %and for any $B\in \mathcal{B}'$, if $j\in B \setminus   \delta(v)$, then $j\in \beta(v)$, and so there exists no $T\in I(v)$ such that $j\in T$.

    For any unions \( B_{p}, B_{q} \in \mathcal{B}' \), it is easy to see that $B_{p}\cap \delta(v)$ and $B_{q}\cap \delta(v)$ in $\mathcal{B}_{|\delta(v)}$ are highly mutually dependent in \( v \). By {\bf NPO}, {\bf N}, and {\bf DMU$_{md}^-$}, we have
\begin{eqnarray*}
    	&&\sum_{j \in B_{p}}\varphi_j (N,v,\mathcal{B}) - \sum_{j \in B_{q}}\varphi_j (N,v,\mathcal{B})\\
    	&\,&\stackrel{\bf N}{=}
    	\sum_{j \in B_{p}\cap\delta(v)}\varphi_j (N,v,\mathcal{B}) - \sum_{j \in B_{q}\cap\delta(v)}\varphi_j (N,v,\mathcal{B}) \nonumber \\
    	&\,&\stackrel{\bf NPO}{=}
    	\sum_{j \in B_{p}\cap\delta(v)}\varphi_j (N \setminus   \beta(v),v,\mathcal{B}_{|N \setminus   \beta(v)}) - \sum_{j \in B_{q}\cap\delta(v)}\varphi_j (N \setminus   \beta(v),v,\mathcal{B}_{|N \setminus   \beta(v)}) \nonumber \\
    	&\,&\stackrel{\bf DMU_{md}^-}{=}
    	\sum_{j \in B_{p}\cap\delta(v)}\varphi_j (N \setminus   \beta(v),{\bf 0},\mathcal{B}_{|N \setminus   \beta(v)}) - \sum_{j \in B_{q}\cap\delta(v)}\varphi_j (N \setminus   \beta(v),{\bf 0},\mathcal{B}_{|N \setminus   \beta(v)}) \nonumber \\
    	&\,&\stackrel{\bf N}{=} 0
    \end{eqnarray*}
Hence,
\begin{eqnarray}\label{eq18}
\sum_{j \in B_{p}\cap\delta(v)}\varphi_j (N,v,\mathcal{B})=\sum_{j \in B_{q}\cap\delta(v)}\varphi_j (N,v,\mathcal{B})
 \ \ \mbox{for all  $B_{p}, B_{q} \in \mathcal{B}'$}.
\end{eqnarray}
Since \( j, k \in B \cap \delta(v) \) for any $B\in \mathcal{B}'$ are necessary players in $v$, they are  mutually dependent in $v$. Thus, by {\bf UDM$_{md}$} and {\bf N}, we have
    \begin{equation*}
    	\varphi_j (N,v,\mathcal{B}) - \varphi_k (N,v,\mathcal{B})
    	\stackrel{\bf UDM_{md}}{=}
    	\varphi_j (N,{\bf 0}, \mathcal{B}) - \varphi_k (N,{\bf 0}, \mathcal{B}) \stackrel{\bf N}{=} 0.
    \end{equation*}
Therefore,
\begin{eqnarray}\label{eq19}
\varphi_j (N,v,\mathcal{B})=\varphi_k (N,v,\mathcal{B})\ \  \mbox{for all $B\in \mathcal{B}'$ and $j,k\in B\cap \delta(v)$}.
\end{eqnarray}
Similarly, we have $\sum_{j \in B_{p}\cap\delta(v)}Ow_j (N,v,\mathcal{B})=\sum_{j \in B_{q}\cap\delta(v)}Ow_j (N,v,\mathcal{B})$ for all  \( B_{p}, B_{q} \in \mathcal{B}' \), and $Ow_j (N,v,\mathcal{B})=Ow_k (N,v,\mathcal{B})$ for all $B\in \mathcal{B}'$ and $j,k\in B\cap \delta(v)$.
Combined with {\bf Case 1},  {\bf E}, and (\ref{eq17})-(\ref{eq19}), we have
\begin{align*}
|\mathcal{B}'|B(i)|\cap \delta(v)|\varphi_i (N,v,\mathcal{B})=|\mathcal{B}'|B(i)|\cap \delta(v)|Ow_i (N,v,\mathcal{B}).
\end{align*}
Consequently, $\varphi_i (N,v,\mathcal{B})=Ow_i (N,v,\mathcal{B})$.

By (\ref{eq17}), {\bf Case 1} and {\bf Case 2}, we conclude that $\varphi_i (N,v,\mathcal{B})=Ow_i (N,v,\mathcal{B})$ for all $i\in N$.
\end{proof}

\subsection{The logical independence of the axioms in axiomatizations}
In this subsection we show the logical independence of the axioms in our axiomatizations.

\begin{remark}
{\rm
The axioms involved in Theorem \ref{thm1} are logically independent.
\begin{itemize}
	
	\item The null allocation rule $\varphi^0={\bf 0}$ on $\mathcal{CG}$ satisfies {\bf N}, {\bf A}, {\bf SWU}, {\bf IAG} and {\bf MBU$^-$}, but not {\bf E}.
	
	\item The $SE$ value \citep{2023_abe} on $\mathcal{CG}$, defined by
	\begin{eqnarray*}
		SE_i(N,v,\mathcal{B})=\sum_{\emptyset \neq T \subseteq N : B(i) \cap T \neq \emptyset} \frac{\lambda_T(v)}{|B(i)|m_T},
	\end{eqnarray*}
	satisfies {\bf E}, {\bf A}, {\bf SWU}, {\bf IAG} and {\bf MBU$^-$}, but not {\bf N}.
	
	\item The allocation rule $\varphi^1$ on $\mathcal{CG}$, defined by
	\[
	\varphi^1_i(N,v,\mathcal{B}) =
	\begin{cases}
		0, & |\mathcal{B}| = |N|, \, i \in S, \\
		\frac{v(N)}{|N \setminus S|}, & |\mathcal{B}| = |N|, \, i \in N \setminus   S, \\
		Ow_i(N,v,\mathcal{B}), & \text{otherwise},
	\end{cases}
	\]
	where $S=\{i \in N \mid i \text{ is a null player in } v \}$, satisfies {\bf E}, {\bf N}, {\bf SWU}, {\bf IAG} and {\bf MBU$^-$}, but not {\bf A}.
	
	\item The allocation rule $\varphi^2$ on $\mathcal{CG}$, defined by
	\[\varphi^2_i(N,v,\mathcal{B})=\sum_{\emptyset \neq T \subseteq N : i \in T} \frac{w_i\lambda_T(v)}{\sum_{j \in T \cap B(i)}w_j m_T},\]
	where $w_i$ is the given exogenous weight for player $i$, satisfies {\bf E}, {\bf N}, {\bf A}, {\bf IAG} and {\bf MBU$^-$}, but not {\bf SWU}.
	
	\item The allocation rule $\varphi^3$ on $\mathcal{CG}$, defined by
	\[
	\varphi^3_i(N,v,\mathcal{B}) =
	\begin{cases}
		\sum_{ T \subseteq N :i \in T}\frac{\lambda_T(v)|B(i)|}{\sum_{j \in T}|B(j)|}, & \mathcal{B} = \mathcal{B} ', \\
		Ow_i(N,v,\mathcal{B}), & \text{otherwise},
	\end{cases}
	\]
	where for any $B_1,B_2 \in \mathcal{B}'$, $|B_1|=|B_2|$, satisfies
	{\bf E}, {\bf N}, {\bf A}, {\bf SWU} and {\bf MBU$^-$}, but not {\bf IAG}.
	
	\item The $Ow^P$ value on $\mathcal{CG}$, defined by
	\[
	Ow_i^{P}(N, v, \mathcal{B}) = \sum_{\emptyset \neq T \subseteq N :  T \ni i} \frac{\lambda_T(v)|B(i)|}{|B(i)\cap T|\sum_{B\in \mathcal{B}:B \cap T \neq \emptyset} |B|},
	\]
	satisfies {\bf E}, {\bf N}, {\bf A}, {\bf SWU} and {\bf IAG}, but not {\bf MBU$^-$}.
	
\end{itemize}}
\end{remark}

\begin{remark}
{\rm
The axioms involved in Theorem \ref{thm2} are logically independent.

\begin{itemize}

    \item The null allocation rule $\varphi^0={\bf 0}$ on $\mathcal{CG}$ satisfies {\bf SWU}, {\bf MBU$^-$} and {\bf IUM$^+$}, but not {\bf E}.

    \item The allocation rule $\varphi^2$ on $\mathcal{CG}$ satisfies {\bf E}, {\bf MBU$^-$} and {\bf IUM$^+$}, but not {\bf SWU}.

    \item The $Ow^{P}$ value on $\mathcal{CG}$ satisfies {\bf E}, {\bf SWU} and {\bf IUM$^+$}, but not {\bf MBU$^-$}.

    \item The SE value on $\mathcal{CG}$ satisfies {\bf E} , {\bf SWU} and {\bf MBU$^-$}, but not {\bf IUM$^+$}.

\end{itemize}}
\end{remark}

\begin{remark}
{\rm
The axioms involved in Theorem \ref{theorem3} are logically independent.

\begin{itemize}
    \item The null allocation rule $\varphi^0={\bf 0}$ on $\mathcal{CG}$ satisfies {\bf NPO} , {\bf UDM$_{md}$} and {\bf DMU$_{md}^-$}, but not {\bf E}.

    \item The allocation rule $\varphi^4$ on $\mathcal{CG}$, defined by
    \[
    \varphi^4_i(N,v,\mathcal{B}) =
    Ow_i(N,v,\mathcal{B})-v(\{i\}) + \frac{\sum_{j \in B(i)}v(\{j\})}{|B(i)|},
    \]
    satisfies {\bf E}, {\bf UDM$_{md}$} and {\bf DMU$_{md}^-$}, but not {\bf NPO}.

    \item  For every $(N,v,B)\in \mathcal{CG}$, setting $B'\in \mathcal{B}$, the value $\varphi^5$ defined by
    \[
    \varphi_i^5(N,v,B)=
    \begin{cases}
   	 Ow_i (N,v,B) - v(\{i\}) + \frac{\sum_{j\in B'} v(\{j\})}{|B'|}, & \text{if } \lambda_N(v)\neq 0, i\in B', \\
   	 Ow_i (N,v,B), & \text{otherwise},
   \end{cases}
    \]
    satisfies {\bf E}, {\bf NPO} and {\bf DMU$_{md}^-$}, but not {\bf UDM$_{md}$}.

    \item The Shapley value on $\mathcal{CG}$ satisfies {\bf E}, {\bf NPO} and {\bf UDM$_{md}$}, but not {\bf DMU$_{md}^-$}.

\end{itemize}}
\end{remark}

\section{Conclusions}\label{section4}

%In this paper, we first propose two new axioms invariance across games and joint productivity, together with basic axioms efficiency, additivity, and null player property we characterize the Owen value on fixed player set. Next we extend the axioms marginality and differential marginality based on a new concept inter-unions marginal contribution and propose two new axioms named marginality of CS-games and differential marginality of CS-games. Combining with the axioms efficiency, and null player out property, we provide another two new axiomatizations of the Owen value on fixed player set and variant player set, respectively. Similar to \cite{1989_winter} generalized the Owen value to TU-games with level structure, it remains an open question to generalize our characterizations of the Owen value to TU-games with level structure.

In this paper we introduce the concepts of highly mutually dependent unions,  the inter-unions marginal contributions  and union-wise mutually dependent games.
Based on these notions, we propose several axioms: weak mutually dependent between unions, differential marginality of inter-mutually dependent unions, super inter-unions marginality and invariance across games. We establish three axiomatic characterizations of the Owen value by combining the above axioms with several standard axioms.

It is worth noting that, the requirement for the axiom of weak mutually dependent between unions that two unions  are highly mutually dependent in $v$ can not be replaced by two unions are highly symmetric (any pair of players, with one from each of the two unions, are symmetric
in the game). This reason is that two unions are highly symmetric does not imply that they are symmetric.
This implies that the Owen value does not satisfy
the modified property. For future study, just as \cite{1989_winter} generalized the Owen value to TU-games with level structure, it remains an open question to generalize our characterizations of the Owen value to TU-games with level structure.

%For the future study, similar to \cite{1989_winter} generalized the Owen value to TU-games with level structure, it remains an open question to generalize our characterizations of the Owen value to TU-games with level structure.

\end{document}